\newtheorem{definition}{Definition}
\newtheorem{proposition}[definition]{Proposition}
\newtheorem{lemma}[definition]{Lemma}
\newtheorem{theorem}[definition]{Theorem}
\def\squareforqed{\hbox{\rlap{$\sqcap$}$\sqcup$}}
\def\qed{\ifmmode\squareforqed\else{\unskip\nobreak\hfil
\penalty50\hskip1em\null\nobreak\hfil\squareforqed
\parfillskip=0pt\finalhyphendemerits=0\endgraf}\fi}
\def\endenv{\ifmmode\;\else{\unskip\nobreak\hfil
\penalty50\hskip1em\null\nobreak\hfil\;
\parfillskip=0pt\finalhyphendemerits=0\endgraf}\fi}
\newenvironment{proof}{\noindent \textbf{{Proof~} }}{\qed}
\newenvironment{example}{\noindent \textbf{{Example~}}}{}
\mathchardef\ordinarycolon\mathcode`\:
\def\vcentcolon{\mathrel{\mathop\ordinarycolon}}
\newcommand{\nc}{\newcommand}
\nc{\rnc}{\renewcommand}
\nc{\beg}{\begin{equation}}
\nc{\eeq}{{\end{equation}}}
\nc{\beqa}{\begin{eqnarray}}
\nc{\eeqa}{\end{eqnarray}}
\nc{\lbar}[1]{\overline{#1}}
\nc{\bra}[1]{\langle#1|}
\nc{\ket}[1]{|#1\rangle}
\nc{\ketbra}[2]{|#1\rangle\!\langle#2|}
\nc{\braket}[2]{\langle#1|#2\rangle}
\nc{\proj}[1]{| #1\rangle\!\langle #1 |}
\nc{\avg}[1]{\langle#1\rangle}
\nc{\Rank}{\operatorname{Rank}}
\nc{\smfrac}[2]{\mbox{$\frac{#1}{#2}$}}
\nc{\tr}{\operatorname{Tr}}
\nc{\ox}{\otimes}
\nc{\dg}{\dagger}
\nc{\dn}{\downarrow}
\nc{\cA}{{\cal A}}
\nc{\cB}{{\cal B}}
\nc{\cC}{{\cal C}}
\nc{\cD}{{\cal D}}
\nc{\cE}{{\cal E}}
\nc{\cF}{{\cal F}}
\nc{\cG}{{\cal G}}
\nc{\cH}{{\cal H}}
\nc{\cI}{{\cal I}}
\nc{\cJ}{{\cal J}}
\nc{\cK}{{\cal K}}
\nc{\cL}{{\cal L}}
\nc{\cM}{{\cal M}}
\nc{\cN}{{\cal N}}
\nc{\cO}{{\cal O}}
\nc{\cP}{{\cal P}}
\nc{\cQ}{{\cal Q}}
\nc{\cR}{{\cal R}}
\nc{\cS}{{\cal S}}
\nc{\cT}{{\cal T}}
\nc{\cX}{{\cal X}}
\nc{\cZ}{{\cal Z}}
\nc{\csupp}{{\operatorname{csupp}}}
\nc{\qsupp}{{\operatorname{qsupp}}}
\nc{\var}{{\operatorname{var}}}
\nc{\rar}{\rightarrow}
\nc{\lrar}{\longrightarrow}
\nc{\polylog}{{\operatorname{polylog}}}
\nc{\wt}{{\operatorname{wt}}}
\nc{\av}[1]{{\left\langle {#1} \right\rangle}}
\def\a{\alpha}
\def\x{\xi}
\def\o{\omega}
\def\D{\Delta}
\def\U{\Upsilon}
\nc{\RR}{{{\mathbb R}}}
\nc{\CC}{{{\mathbb C}}}
\nc{\FF}{{{\mathbb F}}}
\nc{\NN}{{{\mathbb N}}}
\nc{\ZZ}{{{\mathbb Z}}}
\nc{\PP}{{{\mathbb P}}}
\nc{\QQ}{{{\mathbb Q}}}
\nc{\UU}{{{\mathbb U}}}
\nc{\EE}{{{\mathbb E}}}
\nc{\id}{{\operatorname{id}}}
\nc{\CHSH}{{\operatorname{CHSH}}}
\nc{\be}{\begin{equation}}
\nc{\ee}{{\end{equation}}}
\nc{\bea}{\begin{eqnarray}}
\nc{\eea}{\end{eqnarray}}
\nc{\Hom}[2]{\mbox{Hom}(\CC^{#1},\CC^{#2})}
\nc{\rU}{\mbox{U}}
\nc{\ob}[1]{#1}
\nc{\SEP}{{\text{SEP}}}
\nc{\NS}{{\text{NS}}}
\nc{\LOCC}{{\text{LOCC}}}
\nc{\PPT}{{\text{PPT}}}
\nc{\EXT}{{\text{EXT}}}
\nc{\Sym}{{\operatorname{Sym}}}
\nc{\ERLO}{{E_{\text{r,LO}}}}
\nc{\ERLOCC}{{E_{\text{r,LOCC}}}}
\nc{\ERPPT}{{E_{\text{r,PPT}}}}
\nc{\ERLOCCinfty}{{E^{\infty}_{\text{r,LOCC}}}}
\nc{\Aram}{{\operatorname{\sf A}}}
\def\grd@save@target#1{%
  \def\grd@target{#1}}
\def\grd@save@start#1{%
  \def\grd@start{#1}}
\tikzset{
  grid with coordinates/.style={
    to path={%
      \pgfextra{%
        \edef\grd@@target{(\tikztotarget)}%
        \tikz@scan@one@point\grd@save@target\grd@@target\relax
        \edef\grd@@start{(\tikztostart)}%
        \tikz@scan@one@point\grd@save@start\grd@@start\relax
        \draw[minor help lines,magenta] (\tikztostart) grid (\tikztotarget);
        \draw[major help lines] (\tikztostart) grid (\tikztotarget);
        \grd@start
        \pgfmathsetmacro{\grd@xa}{\the\pgf@x/1cm}
        \pgfmathsetmacro{\grd@ya}{\the\pgf@y/1cm}
        \grd@target
        \pgfmathsetmacro{\grd@xb}{\the\pgf@x/1cm}
        \pgfmathsetmacro{\grd@yb}{\the\pgf@y/1cm}
        \pgfmathsetmacro{\grd@xc}{\grd@xa + \pgfkeysvalueof{/tikz/grid with coordinates/major step}}
        \pgfmathsetmacro{\grd@yc}{\grd@ya + \pgfkeysvalueof{/tikz/grid with coordinates/major step}}
        \foreach \x in {\grd@xa,\grd@xc,...,\grd@xb}
        \node[anchor=north] at (\x,\grd@ya) {\pgfmathprintnumber{\x}};
        \foreach \y in {\grd@ya,\grd@yc,...,\grd@yb}
        \node[anchor=east] at (\grd@xa,\y) {\pgfmathprintnumber{\y}};
      }
    }
  },
  minor help lines/.style={
    help lines,
    step=\pgfkeysvalueof{/tikz/grid with coordinates/minor step}
  },
  major help lines/.style={
    help lines,
    line width=\pgfkeysvalueof{/tikz/grid with coordinates/major line width},
    step=\pgfkeysvalueof{/tikz/grid with coordinates/major step}
  },
  grid with coordinates/.cd,
  minor step/.initial=.2,
  major step/.initial=1,
  major line width/.initial=2pt,
}
\tikzset{
  treenode/.style = {align=center, inner sep=0pt, text centered,
    font=\sffamily},
  arn_n/.style = {treenode, circle, white, font=\sffamily\bfseries, draw=black,
    fill=black, text width=1.5em},
  arn_r/.style = {treenode, circle, red, draw=red, 
    text width=1.5em, very thick},
  arn_x/.style = {treenode, rectangle, draw=black,
    minimum width=0.5em, minimum height=0.5em}
}
\begin{document}

\title{Activated zero-error classical communication over quantum channels assisted with quantum no-signalling correlations}
\date{\today}
\author{Runyao Duan}
\email{runyao.duan@uts.edu.au}
\homepage{http://www.uts.edu.au/staff/runyao.duan}
\affiliation{Centre for Quantum Software and Information, Faculty of Engineering and Information Technologies, \\University of Technology Sydney, Australia}
\affiliation{UTS-AMSS Joint Research Laboratory for Quantum Computation and Quantum Information Processing, Academy of Mathematics and Systems Science, Chinese Academy of Sciences, China}

\author{Xin Wang}
\email{xin.wang-8@student.uts.edu.au}
\homepage{http://xinwang.online}

\affiliation{Centre for Quantum Software and Information, Faculty of Engineering and Information Technologies, \\University of Technology Sydney, Australia}
\maketitle

\begin{abstract}
We study the \textit{activated quantum no-signalling-assisted zero-error classical capacity} by first allowing the assistance from some noiseless forward communication channel and later paying back the cost of the helper.  This activated communication model considers the additional forward noiseless channel as a catalyst for communication.
First, we show that the one-shot activated capacity can be formulated as a semidefinite program and we derive a number of striking properties of this capacity. We further present a sufficient condition under which a noisy channel can be activated.
Second, we find that one-bit noiseless classical communication is able to fully activate any classical-quantum channel to achieve its asymptotic capacity, or the semidefinite (or fractional) packing number. Third, we prove that the asymptotic activated capacity cannot exceed the original asymptotic capacity of any quantum channel. We also show that the asymptotic no-signalling-assisted zero-error capacity does not equal to the semidefinite packing number for quantum channels, which differs from the case of classical-quantum channels. 
\end{abstract}

\section{Introduction}

A fundamental problem of information theory is to determine the capacity of a communication channel, which describes the capability of the channel for delivering information from the sender to the receiver. Shannon first discussed this problem in the zero-error setting and described the zero-error capacity of a channel as the maximum rate at which it can be used to transmit information perfectly \cite{Shannon1956}. It is well-known that the Shannon zero-error capacity is extremely difficult to compute even for very simple classical channels. Nevertheless, this capacity is upper bounded by the  Lov\'asz $\vartheta$ function \cite{Lovasz1979} which is efficiently computable by semidefinite programming \cite{Vandenberghe1996}.

Recently the zero-error information theory has been studied in the quantum setting and many interesting phenomena were observed. For instance, it was shown that shared entanglement could sometimes improve the zero-error capacity of a classical channel \cite{Cubitt2010, Leung2012},
and the entanglement-assisted zero-error channel coding was studied in \cite{Piovesan2015,Briet2015,Stahlke2014}. Furthermore, both the zero-error classical and quantum capacities can be super-activated \cite{Duan2008a, Duan2009, Cubitt2011a, Cubitt2012}. Another notable fact is that the entanglement-assisted zero-error capacity of a classical channel is also upper-bounded by the Lov\'asz $\vartheta$ function \cite{Beigi2010,Duan2013}, and this result can be generalized to quantum setting by using a quantum version of Lov\'asz $\vartheta$ function \cite{Duan2013}.  Recently the separation between the quantum  Lov\'asz $\vartheta$ function and entanglement-assisted zero-error capacity of quantum channels was shown in \cite{Wang2016f}, while the case of classical channels remains unknown.

As more general resources, the no-signaling (NS) correlations have been considered to assist the zero-error communication in
\cite{Cubitt2011,Duan2016}, respectively.  
The no-signalling correlations arise in the research of the relativistic causality of quantum operations \cite{Beckman2001, Eggeling2002a, Piani2006, Oreshkov2012}. 
Cubitt et al. \cite{Cubitt2011} first introduced classical no-signalling correlations into the zero-error classical communication.  One of us and Winter \cite{Duan2016} further introduced quantum no-signalling correlations into the zero-error communication problem. They formulated the one-shot capacity as a semidefinite program (SDP) which depends only on the non-commutative bipartite graph of the given channel (an operator space that is given by the linear span of the Choi-Kraus operators of the channel).
  Furthermore, Duan, Severini, and Winter studied the zero-error communication via quantum channel assisted by unlimited noiseless feedback and showed the induced capacity also depends only on the non-commutative bipartite graph \cite{Duan2015}.  

In this paper, we further develop the theory of quantum NS-assisted communication by introducing the activated communication model. The model is introduced in Section \ref{sec:model} and it considers the additional forward noiseless channel as a catalyst for communication. For a quantum channel $\cN$, we can ``borrow'' a noiseless classical channel $\cI$, then we can use $\cN \ox \cI$ to transmit information. After the communication finishes we ``pay back'' the capacity of $\cI$.  This kind of communication method was suggested in \cite{Acin2015a}, and was highly relevant to the notion of \textit{potential capacity} recently studied by Winter and Yang \cite{Yang2015}. We further show that the one-shot activated zero-error capacity can also be formulated as an SDP.
In Section \ref{sec: cq},
 we show a striking result that one bit can even fully activate any cq channel to achieve its asymptotic NS-assisted zero-error capacity (or the fractional packing number). 
In Section \ref{sec:asy}, we further show that there is no activation in the asymptotic regime and the one-shot activated capacity is better than the super-dense coding bound in \cite{Duan2016}.

\section{Zero-error communication over quantum channels}
A quantum channel $\cN_{A'\to B}$ is a completely positive (CP) and trace-preserving (TP) linear map from operators on a finite-dimensional Hilbert space $A'$ to operators on a finite-dimensional Hilbert space $B$. The Choi-Jamio\l{}kowski matrix of $\cN$ is  $J_{\cN}=\sum_{ij} \ketbra{i}{j}_A \ox \cN(\ketbra{i}{j}_{A'})=(\text{id}_{A}\ox\cN)\proj{\Phi}$, where $A$ and $A'$ are isomorphic Hilbert spaces with respective orthonormal basis $\{\ket i_A\}$ and $\{\ket j_{A'}\}$, $\ket \Phi =\sum_i \ket{i}_A\ket{i}_{A'}$, and $\text{id}_{A}$ is the identity map. 

In the communication task,
Alice wants to send the classical messages to Bob using the composite channel $\cM_{A\to B'}=\Pi_{AB\to A'B'}\circ\cN_{A'\to B}$, where $\Pi$ is a quantum bipartite operation that generalizes the usual encoding scheme $\cE$ and decoding scheme $\cD$ (see Figure.~\ref{fig:codes}).
We say such $\Pi$ is an NS-assisted code if it can be implemented by local operations with quantum no-signalling correlations. Here, the no-signalling constraint means that Alice and Bob cannot use the bipartite operation $\Pi$ to communicate classical information. The NS-assisted codes have also been applied to study the ordinary classical and quantum communication over quantum channels  (e.g., \cite{Duan2016,Leung2015c,Matthews2012,Wang2017d, Lai2015,Wang2016b,Park2016,Wang2016a,Matthews2016,Xie2017a}).

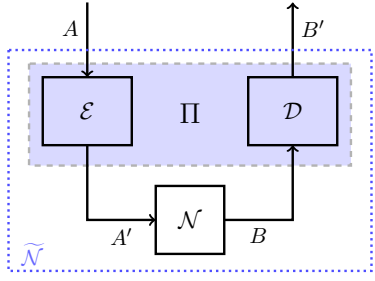
\begin{figure}[h]
\centering
\resizebox {5cm} {!} {
\begin{tikzpicture}[scale = 1.2]
    \def\xbb{0.5};\def\xb{1.5};\def\xsh{1.3};\def\ysh{1};
    \def\ya{1.3};\def\yc{3.4};\def\lo{0.2};\def\loo{0.4};
    \pgfmathsetmacro\xa{-\xb};\pgfmathsetmacro\xaa{-\xbb};
    \pgfmathsetmacro\xc{\xa-\xsh/2};\pgfmathsetmacro\xd{\xa+\xsh/2};
    \pgfmathsetmacro\xe{\xb-\xsh/2};\pgfmathsetmacro\xf{\xb+\xsh/2};
    \pgfmathsetmacro\yb{\ya+\ysh};
    \draw[very thick,dashed,fill=blue!50,opacity=0.3] (\xc-\lo,\yb+\lo) rectangle (\xf+\lo,\ya-0.3);
    \draw[very thick,->] (\xa,\yc) -- node[left,shift={(0,0.2)}] {$A$} (\xa,\yb);
    \draw[very thick,<-] (\xb,\yc) -- node[right,shift={(0,0.2)}] {$B'$} (\xb,\yb);
    \draw[very thick] (\xc,\yb) rectangle (\xd,\ya) node[midway] {\large $\cE$};
    \draw[very thick] (\xe,\yb) rectangle (\xf,\ya) node[midway] {\large $\cD$};
    \draw[very thick,->] (\xa,\ya) -- (\xa,0.2) -- node[below] {$A'$} (\xaa,0.2);
    \draw[very thick,<-] (\xb,\ya) -- (\xb,0.2) -- node[below] {$B$} (\xbb,0.2);
    \draw[very thick] (\xaa,\ysh/2+0.2) rectangle (\xbb,-\ysh/2+0.2) node[midway] {\large $\cN$};
    \node[] (dots) at (0,\yb-0.52) {\Large $\Pi$};
    \draw[very thick,dotted,blue!70] (\xc-\loo-0.1,\yb+\loo) rectangle (\xf+\loo+0.1,-\ysh/2-\lo+0.15);
    \node[blue!60] (dots) at (\xc-0.15,-\ysh/2+0.2) {$\widetilde \cN$};
\end{tikzpicture}
}
\caption{General code scheme}
\label{fig:codes}
\end{figure}

We denote $\cM_{\rm{0,NS}}(\cN)$ as the maximum number of bits can be transmitted perfectly over a single use of quantum channel $\cN$ with NS-assisted codes, i.e.,
\begin{equation}
\cM_{\rm{0,NS}}(\cN):=\sup\{\log \ell:  \Pi\circ\cN= \cI_\ell, \Pi \text { is an NS-assisted code} \},
\end{equation}
where
 $\cI_\ell(\rho)=\sum_{i=0}^{\ell-1}\tr(\rho\proj{i})\proj{i}$ is the classical noiseless channel.
Also, throughout this paper, $\log$ denotes the binary logarithm $\log_2$. 

  As showed in \cite{Duan2016}, for a channel  $\cN(\rho)=\sum_{k}E_k\rho E_k^{\dagger}$, the one-shot NS-assisted zero-error $\cM_{\rm{0,NS}}(\cN)$ only depends on the non-commutative bipartite graph $K$ of the channel, i.e.,
\begin{equation}
\cM_{\rm{0,NS}}(\cN)=\cM_{\rm{0,NS}}(K),
\end{equation}
where  $K=\rm{span}\{E_k\}$ is the linear span of the Choi-Kraus operators of the channel.
For a quantum channel $\cN$ with non-commutative bipartite graph $K$, the one-shot NS-assisted zero-error capacity is gievn by \cite{Duan2016}
 \begin{equation}
 \cM_{\rm{0,NS}}(\cN)= \cM_{\rm{0,NS}}(K)=\log \U(K), 
\end{equation}
 where $\U(K)$ is given by the following SDP:
\begin{equation}\begin{split}\label{eq:Upsilon}
\U(K) = \max &\tr S_A \\
 \text{ s.t. }& 0 \leq U_{AB} \leq S_A \ox \1_B, \\
        &  \tr_A U_{AB} = \1_B, \\
        &\tr P_{AB}(S_A\ox\1_B-U_{AB}) = 0.
\end{split}\end{equation}
Here, $P_{AB}$ denotes the projection onto the support of the Choi-Jamio\l{}kowski matrix of $\cN$, which is uniquely determined by the non-commutative bipartite graph $K$. Due to this fact, we will not distinguish between the notations $\U(\cN)$ and ${\U}(K)$ in the rest of this paper. 

Then by the regularization, the NS-assisted zero-error capacity is
\begin{equation}\label{C0NS}
C_{0,\NS}(\cN)    = \sup_{n\geq 1} \frac1n \log \U\left(K^{\ox n}\right).
\end{equation}
The $\sup$ in Eq. (\ref{C0NS}) can be replaced by $\lim$ based on  the lemma about existence of limits in~\cite{Barnum}.

\section{One-shot activated zero-error communication}\label{sec:model}
\subsection{Activated one-shot zero-error capacity}
The model of activated communication is described as follows. For a quantum channel $\cN$ assisted by NS codes, we can first borrow a noiseless classical channel $\cI_\ell$ whose capacity is $\log \ell$, then we can use $\cN\ox \cI_\ell$ coherently to transmit classical messages. After the communication finishes, we just pay back the capacity of $\cI_\ell$ (see Figure~\ref{fig: model}.) The communication model follows the idea of potential capacities of quantum channels introduced by Winter and Yang \cite{Yang2015}.
\begin{figure}[h]
\centering
  \begin{tikzpicture}[scale = 0.8]
 
 \node[very thick] at (-1.5,1) {$m\in\{1,\dots,M\}$};
 \draw[very thick,->] (0,1) -- node[above] {} (1,1);
  \draw[very thick,->] (6,1) -- node[above] {} (7,1);
   \node[very thick] at (8.8,1) {$\hat m\in\{1,\dots,M\}$};
  
    \def\xa{1.7};\def\xb{2.5};\def\xc{3};\def\xd{4};\def\xe{5};\def\xf{6};\def\xg{7};
    \def\ya{0.8};\def\yc{-1.6};\def\yd{-2.4};  \def\o{0.2} \def\ye{1}; \def\yf{2};
    \pgfmathsetmacro\yb{-\ya}
    \draw[very thick,-] (2,\yf) -- node[above] {} (3,\yf);
        \draw[very thick,-] (4,\yf) -- node[above] {} (5,\yf);
            \draw[very thick,-] (2,0) -- node[above] {} (3,0);
        \draw[very thick,-] (4,0) -- node[above] {} (5,0);

   \draw[very thick]  (1,2.5) rectangle (2,-0.6) node[midway] {$\cE$};
    \draw[very thick]  (\xc,\ya-0.2) rectangle (\xd,\yb+0.2) node[midway] {$\cN$};
      \draw [very thick, dashed] (\xc,\ya+1.6) rectangle (\xd,\yb+2.2) node[midway] {$\cI_\ell$};
    \draw[very thick]  (5,2.5) rectangle (\xf,-0.6) node[midway] {$\cD$};
%
\end{tikzpicture}
\caption{Activated classical communication.}
\label{fig: model}
\end{figure}
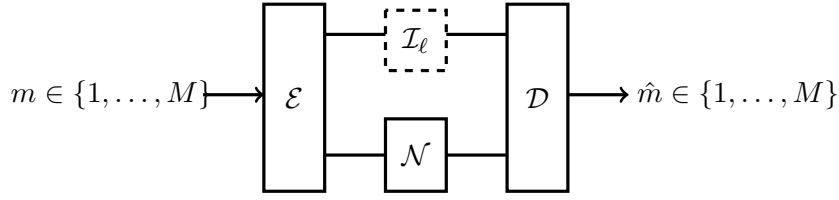

\begin{definition} 
For a quantum channel $\cN$ with non-commutative bipartite graph $K$, the one-shot activated no-signalling assisted zero-error classical capacity is defined as the following:
\begin{equation}
\cM_{\rm{0,NS}}^a(\cN)=\cM_{\rm{0,NS}}^a(K):=\mathop {\sup }\limits_{\ell \ge 1} [\cM_{\rm{0,NS}}(K \otimes \D_\ell) - \log {\ell}],
\end{equation}
where $\D_\ell$ is the non-commutative graph of the noiseless channel $\cI_\ell(\rho)=\sum_{i=0}^{\ell-1}\tr(\rho\proj{i})\proj{i}$.
\end{definition}


\begin{definition} 
For a quantum channel $\cN$ with non-commutative bipartite graph $K$,
the asymptotic activated no-signalling zero-error classical capacity is given the following regularization:
\begin{equation}
C_{0,\NS}^a(\cN)=C_{0,\NS}^a(K):=\sup_{n\geq 1} \frac1n \cM_{\rm{0,NS}}^a\left(K^{\ox n}\right).
\end{equation}
\end{definition}


To provide a feasible formulation of the activated capacity $\cM_{\rm{0,NS}}^a(\cN)$, let us frsit introduce a slightly revised SDP of $\U(K)$ as follows,
  \begin{equation}\begin{split}
    \label{eq:Upsilon-hat}
  \widehat{\U}(K)
          = \max &\tr S_A \\
    \text{ s.t. }& 0 \leq U_{AB} \leq S_A \ox \1_B, \\
          & \tr_A U_{AB} \leq \1_B,    \\
          & \tr P_{AB}(S_A\ox\1_B-U_{AB}) = 0.
  \end{split}\end{equation}
 The only difference between $\widehat{\U}(K)$ and $\U(K)$ is that now $\tr_A U_{AB}$ is only required to be less than or equal to $\1_B$, and an equality is not necessary. However, we will see that such a small revision is of crucial importance. The dual SDP of  $\widehat{\U}(K)$ is given by
\begin{equation}\begin{split}
  \label{eq:Upsilon-hat-dual}
  \widehat{\U}(K) = \min &\tr T_B \\
   \text{ s.t. } & V_{AB} \leq \1_A \ox T_B,    \\
        &\tr_B V_{AB} \geq \1_A,\ T \geq 0, \\
        & (\1-P)_{AB}V_{AB}(\1-P)_{AB} \leq 0.
\end{split}\end{equation}
Note that by strong duality, the values of both the primal and the dual SDPs coincide.

Now we are ready to present the main result.
\begin{theorem}
For any quantum channel $\cN$ with non-commutative bipartite graph $K$,  
\begin{equation}\begin{split}
\cM_{\rm{0,NS}}^a(\cN)=\log \widehat\U(K).
  \end{split}\end{equation} 
\end{theorem}

\begin{proof} 
The intuition of this theorem is that the additional noiseless channel may play a role as a catalyst during the communication task.

To prove the achievable part, it's important to observe that the additional noiseless channel indeed provides a larger solution space of $\U(K\ox \D_\ell)$.
Let us first consider the case $\ell=2$ and assume that the optimal feasible solution of $\widehat{\U}(K)$ is $\{S_A, U_{AB}\}$. 
Let us choose 
\begin{align}
S_{AA'}=S_A \ox (\ket {0} \bra {0}+\ket{1}\bra{1})_{A'}
\end{align}
 and
\begin{equation}\begin{split}
U_{AA'BB'}=&U_{AB} \ox (\ket {00} \bra {00}+\ket{11}\bra{11})_{A'B'}
+\bar U_{AB} \ox (\ket {01} \bra{01}+\ket{10}\bra{10})_{A'B'},
\end{split}\end{equation}
where $\bar U_{AB}=\frac{S_A}{\tr S_A} \ox (\1_B-\tr_A U_{AB}).$

This construction ensures that
\begin{equation}
\tr_{AA'}U_{AA'BB'}=\tr_{A}((U_{AB} +\bar U_{AB}) \ox \1_{B'})=\1_{BB'}.
\end{equation}

Moreover, we have
\begin{align}
S_{AA'}\ox\1_{BB'}-U_{AA'BB'}
=&(S_{A}\ox \1_B-U_{AB}) \ox(\ket {00} \bra {00}
+\ket{11}\bra{11})_{A'B'}\\
&+(S_{A}\ox \1_B-\bar U_{AB}) \ox(\ket {01} \bra {01}+\ket{10}\bra{10})_{A'B'},
\end{align}
which directly means that
\begin{equation}
S_{AA'}\ox\1_{BB'}-U_{AA'BB'} \ge 0.
\end{equation}
Also, noting that $\tr P_{AB}(S_{A}\ox \1_B-U_{AB}) =0$ and $\tr \1_{A'B'}(\ket {01} \bra {01}+\ket{10}\bra{10})_{A'B'}=0$, we have that
\begin{equation}
\tr(P_{AB}\ox\1_{A'B'})(S_{AA'}\ox\1_{BB'}-U_{AA'BB'})=0.
\end{equation}

Therefore, $\{S_{AA'}, U_{AA'BB'}\}$ is a feasible solution of $\U(K \otimes {\D_2})$, which means that
\begin{equation}\label{eq:achieve}
\mathop {\sup }\limits_{\ell \ge 2} \frac{{\U(K \otimes \D_\ell})}{\ell} \ge \frac{\U(K \otimes {\D_2})}{2}\ge \frac{\tr S_{AA'}}{2}=\widehat \U(K).
\end{equation}


On the other hand, to prove the converse part, we will use the fact that $\widehat{\U}(K\ox\Delta_\ell) = \ell\,\widehat{\U}(K)$, which is provided in the following Lemma~\ref{lemma:converse}.
This fact directly implies that
\begin{equation}\label{eq:converse}
 \mathop {\sup }\limits_{\ell \ge 2} \frac{{\U(K \otimes \D_\ell})}{\ell} \le  \mathop {\sup }\limits_{\ell \ge 2} \frac{{\widehat\U(K \otimes \D_\ell})}{\ell} =\widehat \U(K).
\end{equation}

Finally, by Eq.~\eqref{eq:achieve} and Eq.~\eqref{eq:converse}, we can conclude that
\begin{equation}
\cM_{\rm{0,NS}}^a(\cN)=\cM_{\rm{0,NS}}^a(K)=\log \widehat\U(K).
\end{equation}
\end{proof}

A simple but useful property of $\widehat{\U}$ is shown as follows.
\begin{lemma}\label{lemma:converse}
For any non-commutative bipartite graph $K$, we have
$$\widehat{\U}(K\ox\Delta_\ell) = \ell\,\widehat{\U}(K).$$
\end{lemma}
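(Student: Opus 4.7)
The plan is to establish both directions: super-multiplicativity of $\widehat{\U}$ combined with the computation $\widehat{\U}(\D_\ell) = \ell$ gives the lower bound, and an explicit feasible dual solution constructed from an optimum for $\widehat{\U}(K)$ gives the upper bound.

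For $\widehat{\U}(\D_\ell) = \ell$, write the Choi--Jamio\l kowski support projector of $\D_\ell$ as $Q_{A'B'} := \sum_{i=1}^{\ell}\proj{ii}$. The primal choice $S=\1_{A'}$, $U=Q$ is feasible with $\tr S=\ell$ (the three constraints reduce respectively to $Q\leq \1\ox\1$, $\tr_{A'}Q=\1_{B'}$, and $\tr Q(\1-Q)=0$), and the symmetric dual choice $T=\1_{B'}$, $V=Q$ is feasible with $\tr T=\ell$; strong duality pins down $\widehat{\U}(\D_\ell)=\ell$. Combined with the super-multiplicativity $\widehat{\U}(K_1\ox K_2)\geq \widehat{\U}(K_1)\widehat{\U}(K_2)$ asserted above, this yields $\widehat{\U}(K\ox\D_\ell)\geq \ell\,\widehat{\U}(K)$.

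For the reverse inequality, let $(T_B, V_{AB})$ be an optimal dual solution for $\widehat{\U}(K)$ and set $\widetilde{T}_{BB'} := T_B \ox \1_{B'}$ and $\widetilde{V}_{AA'BB'} := V_{AB}\ox Q_{A'B'}$, so that $\tr\widetilde{T}=\ell\,\widehat{\U}(K)$. The product graph has support projector $P_{AB}\ox Q_{A'B'}$, and I would verify the dual constraints in turn: the inequality $\widetilde{V}\leq \1_{AA'}\ox\widetilde{T}$ follows by decomposing $\1_{A'B'}=Q+(\1-Q)$ and using $V\leq \1_A\ox T$ together with $T\geq 0$; the marginal constraint $\tr_{BB'}\widetilde{V}\geq \1_{AA'}$ follows from $\tr_{B'}Q=\1_{A'}$ and $\tr_B V\geq \1_A$; and $\widetilde{T}\geq 0$ is immediate.

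The essential step is verifying $(\1 - P\ox Q)\,\widetilde{V}\,(\1 - P\ox Q)\leq 0$. The key algebraic observation is that the orthogonal decomposition $\1_{A'B'}=Q+(\1-Q)$ makes both $\widetilde{V} = V\ox Q + 0\ox(\1-Q)$ and $\1-P\ox Q = (\1-P)\ox Q + \1_{AB}\ox(\1-Q)$ block diagonal with respect to the subspaces supported by $Q$ and $\1-Q$. A direct block-wise computation then collapses the sandwich to $(\1-P)V(\1-P)\ox Q$, which is $\leq 0$ since $(\1-P)V(\1-P)\leq 0$ and $Q\geq 0$. The main obstacle I anticipate is precisely this constraint: a naive extension such as $\widetilde{V}=V\ox\1_{A'B'}$ produces a leftover term $V\ox(\1-Q)$ after expansion that is not sign-definite, so supporting $\widetilde{V}$ only inside the ``good'' subspace captured by $Q$ is exactly what makes the argument work.
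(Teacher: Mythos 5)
Your proof is correct and follows essentially the same route as the paper: super-multiplicativity for the lower bound, and for the upper bound the dual feasible solution $\bigl(T_B\ox\1_{B'},\ V_{AB}\ox\sum_i\proj{ii}_{A'B'}\bigr)$ built from an optimal dual pair for $K$. You spell out more of the feasibility checks (in particular the block decomposition that kills the off-$Q$ part of the sandwich), but the construction and the key observation are the same.
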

\begin{proof}
On one hand, it is evident from the super-multiplicativity that $\widehat{\U}(K\ox\Delta_\ell) \ge \ell\,\widehat{\U}(K)$.
On the other hand, note that an optimal solution for SDP (\ref{eq:Upsilon-hat-dual}) for $\Delta_\ell$ is given by
$\{\1_{B'}, \sum_{i=1}^\ell \proj{ii}_{A'B'}\}$,
and we assume that the optimal solution of SDP (\ref{eq:Upsilon-hat-dual}) for $K$ is $\{T_{B},V_{AB}\}$.
It is evident that 
\begin{align}
V_{AB}\ox\sum_{i=1}^\ell \proj{ii}_{A'B'} \leq \1_{AA'} \ox T_{B}\ox\1_{B'}.
\end{align}
Then, it can be checked that $\{V_{AB}\ox \sum_{i=1}^\ell \proj{ii}_{A'B'}$, $T_{B}\ox\1_{B'}\}$
is a feasible solution of SDP(\ref{eq:Upsilon-hat-dual}) for $\widehat{\U}(K\ox\Delta_\ell)$. Therefore,$\widehat{\U}(K\ox\Delta_\ell) \le \tr T_{B}\ox\1_{B'} =\ell\,\widehat{\U}(K)$.
\end{proof}

Indeed, in the previous proof we have shown the following stronger result:
$$\widehat \U(K)=\mathop \frac{{\U(K \otimes \D_\ell})}{\ell}=\frac{{\U(K \otimes {\D_{2}})}}{2}, \forall~\ell\geq 2.$$

\subsection{Activation via noisy quantum channels}
We further discussion the activation via noisy quantum channels
\begin{proposition}\label{upsilon two}
Let us consider two quantum channels $\cN_1$ with non-commutative bipartite graphs $K_1$ and $K_2$, respectively. If $\U(K_2)-1\ge\frac{1}{\widehat \U(K_1)}$, then
\begin{equation}
\cM_{\rm{0,NS}}(K_1 \ox K_2) - \cM_{\rm{0,NS}}(K_2) \ge \cM_{\rm{0,NS}}^a(K_1).
\end{equation}
 In other words, $K_2$ can activate $K_1$ if $K_1$ is activatable.
 In particular, this inequality always holds when $\U(K_2) \ge 2$.
\end{proposition}
\begin{proof}
Let us assume that the optimal solution to the SDP~\eqref{eq:Upsilon-hat} of $\widehat\U(K_1)$ is $\{S_A, U_{AB}\}$  while
the optimal solution to the SDP~\eqref{eq:Upsilon} of $\U(K_2)$
 is $\{S_{A'}, U_{A'B'}\}$.
 
Then we can choose 
\begin{align}
S_{AA'}&=S_A \ox S_{A'},\\
U_{AA'BB'}&=U_{AB} \ox U_{A'B'}+ \bar U_{AB} \ox V_{A'B'}, 
\end{align}
where
$V_{A'B'}=({S_{A'}\ox\1_{B'}- U_{A'B'}})/({\tr S_{A'}-1})$  and $\bar U_{AB}={S_A}/{\tr S_A} \ox (\1_B-\tr_A U_{AB})$.
Furthermore, we have
\begin{equation}
S_{AA'} \ox \1_{BB'}-U_{AA'BB'}=(S_A \ox \1_B-U_{AB}) \ox U_{A'B'}+[(\tr S_{A'}-1-\frac{1}{\tr S_A})S_A \ox \1_B] \ox V_{A'B'}.
\end{equation}
Then, one can check that the constucted solutions satisfy 
\begin{align}
S_{AA'}\ox\1_{BB'}-U_{AA'BB'} \ge 0,
\tr_{AA'}U_{AA'BB'}=\1_{BB'},\\
(P_{AB}\ox P_{A'B'})(S_{AA'}\ox\1_{BB'}-U_{AA'BB'} )= 0.
\end{align}

Therefore, $\{S_{AA'}, U_{AA'BB'}\}$ is a feasible solution to  the SDP~\eqref{eq:Upsilon} of $\U(K_1 \ox K_2)$, which means that
\begin{align}
\U(K_1 \ox K_2) \ge \widehat \U(K_1) \U(K_2).
\end{align}
\end{proof}

If we only consider using the channel $\cN$ to activate itself, we have the following result from the above proposition.

For any quantum channel $\cN$ with non-commutative bipartite graph $K$, if $\U(K) \ge \frac{1+\sqrt5}{2}$, then
\begin{equation}
\frac{\U(K \ox K)}  {\U(K)}\ge \widehat \U(K).
\end{equation}
Note that $\U(K) \ge \frac{1+\sqrt5}{2}$ means $ \U(K)-1-\frac{1}{\U(K)} \ge 0$. Thus the result follows directly from Proposition \ref{upsilon two}.

\section{Activated zero-error capacity of classical-quantum channel}\label{sec: cq}
A classical-quantum (cq) channel $\cN:i\to \rho_i (1\le i\le n)$ is a CPTP map with classical inputs $\{i\}_{i=1}^n$ and quantum outputs  $\{\rho_i\}_{i=1}^n$.
The non-commutative bipartite graph of a cq channel is called a cq graph.  Given a cq channel  $\cN:i\to \rho_i  (1\le i\le n)$ with cq graph $K$, its one-shot NS-assisted zero-error capacity (quantified as messages) can be simplified to
\begin{equation}\begin{split}
  \label{eq:Upsilon:cq channel}
  \U(K) = \max & \sum_i s_i \\
   \text{ s.t. }&\  0 \leq s_i,\ 0 \leq R_i \leq s_i(\1-P_i),\\
        &\sum_i (s_i P_i + R_i) = \1.
\end{split}\end{equation}
where $P_i$ is the projection onto the support of $\rho_i$ for $1\le i \le n$

Moreover,   it was shown in \cite{Duan2016} that the asymptotic no-signalling assisted  zero-error classical capacity of a cq channel is equal to the semidefinite (fractional) packing number first suggested by Harrow \cite{Duan2016}.
\begin{lemma}(Theorem 4 in \cite{Duan2016})
For any  classical-quantum channel $\cN:i\to \rho_i$ $(1\le i\le n)$ with cq graph $K$, 
\begin{equation}
C_{\rm{0,NS}}(\cN)=\log \Aram(K),
\end{equation}
with
\begin{equation}\begin{split}
  \label{eq:Aram:cq channel}
  \Aram(K) = \max &\sum_i s_i \\ 
  \text{ s.t. }&  0 \leq s_i,\ \sum_i s_i P_i \leq \1.
  \end{split}
\end{equation}
where $P_i$ the projection onto the support of $\rho_i$ for $1\le i\le n$.
\end{lemma}
This result is a classical-quantum generalization of the fact that the fractional packing/covering number~\cite{Shannon1956,Scheinerman2011} of the bipartite graph (induced by the classical channel) equals to its NS-assisted zero-error capacity~\cite{Cubitt2011}. Moreover, Shannon proved that the feedback-assisted zero-error capacity of a classical channel is also given by the fractional packing number~\cite{Shannon1956}.

For any channel $\cN$ with cq graph $K$, the one-shot activated capacity  $\cM_{\rm{0,NS}}^a(\cN)=\log \widehat \U(K)$ can be simplied to
\begin{equation}\begin{split}
  \label{eq:hat-Upsilon:cq channel}
 \widehat \U(K)= \max  &\sum_i s_i \\
\text{ s.t. }&  0 \leq s_i,\ 0 \leq R_i \leq s_i(\1-P_i),\\
        & \sum_i (s_i P_i + R_i) \le \1. \
\end{split}\end{equation}

\begin{theorem}\label{activated-cq}
For any classical-quantum channel $\cN$ with cq graph $K$, 
\begin{equation}
M_{\rm{0,NS}}^a(\cN)=\log\Aram(K). 
\end{equation}
In other words, for any cq channel, the asymptotic NS-assisted zero-error capacity (or the semidefinite packing number) can be achieved via activated NS codes in the one-shot regime, i.e.,
 \begin{equation}
C_{\rm{0,NS}}^a(\cN)=M_{\rm{0,NS}}^a(\cN)=\log\Aram(K).
\end{equation}
\end{theorem}
\begin{proof}
First, we will show  $\Aram(K)  \ge \widehat \U(K)$. Suppose that optimal solution of the SDP~\eqref{eq:hat-Upsilon:cq channel} of $\widehat \U(K)$ is $\{s_i,R_i\}$. Then, 
\begin{align}
\sum_i s_iP_i \le \1-\sum_i R_i\le \1,
\end{align}
which means that $\{s_i\}$ is a feasible solution for $\Aram(K)$. So we have $\Aram(K)  \ge \widehat \U(K)$.

Second, let us assume the optimal solution of SDP~(\ref{eq:Aram:cq channel}) is $\{s_i\}$, let $R_i=0$ for all $i$. It is easy to check that $\{s_i,R_i\}$ is a feasible solution of SDP~(\ref{eq:hat-Upsilon:cq channel}), which means that $\Aram(K)  \le \widehat \U(K)$.
Therefore, for any cq graph $K$, it holds that 
\begin{equation}
\widehat \U(K)=\Aram(K).
\end{equation}
\end{proof}

To see the existence of activation, let us consider an example here.

\vspace{0.25cm}
\begin{example}
For the simplest possible cq channel $\cN$, which has only two inputs and two pure output states $P_i = \proj{\psi_i}$.
Withour loss of generarity, we assume that $\ket{\psi_0} = \alpha\ket{0} + \beta\ket{1}$  and
  $\ket{\psi_1} = \alpha\ket{0} - \beta\ket{1}$ with $\alpha \geq \beta = \sqrt{1-\alpha^2}$.
  In \cite{Duan2016}, it has been solved that
$\U(K)=1$ and  $\Aram(K) =  \frac{1}{\alpha^2} $.
Hence, by Theorem \ref{activated-cq}, we know
\begin{equation}
\widehat \U(K)=\frac{\U(\cN \otimes {\D_2})}{2} = \frac{1}{\alpha^2}>\U(K)=1.
\end{equation}
Furthermore, we have
 \begin{equation}
C_{\rm{0,NS}}^a(\cN)=M_{\rm{0,NS}}^a(\cN)=-2\log \a >M_{\rm{0,NS}}(\cN)=0.
\end{equation}
\end{example}

\section{Activated zero-error communication in the asymptotic regime}\label{sec:asy}
\subsection{Asymptotic zero-error capacity}
As we find the activation phenomenon of zero-error communication in the one-shot regime, it's natural to wonder whether there exists an activation in the asymptotic regime. In the following theorem, we prove that the answer is nagetive.
\begin{theorem}\label{asymptotic activated capacity}
 For any channel $\cN$ with non-commutative bipartite graph K with positive zero-error capacity, let $n_0$ be the smallest integer such that $\U(K^{\ox n_0})\ge2$. Note that $n_0$ always exists and depends only on K. Then for any $n\ge n_0$, we have
\begin{equation}\label{asymptotic hatU=U}
2\le \widehat{\U}(K^{\ox {(n-n_0)}}) \le \U(K^{\ox n}) \le \widehat{\U}(K^{\ox n}).
\end{equation}
Moreover,
\begin{equation}\label{asymptotic hatU}
C_{\rm{0,NS}}^a(K)= \mathop {\sup }\limits_{n \ge 1} \log \sqrt[n]{{\widehat \U({K^{ \otimes n}})}} = \mathop {\lim }\limits_{n \to \infty }\log \sqrt[n]{{\widehat \U({K^{ \otimes n}})}}=C_{\rm{0,NS}}(K).\end{equation}
\end{theorem}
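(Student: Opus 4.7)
The plan is to use Theorem~\ref{upsilon two} as the engine driving the sandwich (\ref{asymptotic hatU=U}), and then to deduce the asymptotic identity (\ref{asymptotic hatU}) by an elementary averaging argument together with Fekete's lemma. The upper bound $\U(K^{\ox n}) \le \widehat{\U}(K^{\ox n})$ is immediate, since every feasible point $\{S_A,U_{AB}\}$ for SDP~(\ref{eq:Upsilon}) is automatically feasible for SDP~(\ref{eq:Upsilon-hat}): the constraint $\tr_A U_{AB}=\1_B$ is strictly stronger than $\tr_A U_{AB}\le\1_B$. For the lower bound $2\widehat{\U}(K^{\ox(n-n_0)}) \le \U(K^{\ox n})$, I would apply Theorem~\ref{upsilon two} with $K_1=K^{\ox(n-n_0)}$ and $K_2=K^{\ox n_0}$. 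By the very definition of $n_0$ we have $\U(K^{\ox n_0})\ge 2$, which is precisely the clean sufficient condition singled out in Theorem~\ref{upsilon two}; that theorem then gives $\U(K^{\ox n}) \ge \widehat{\U}(K^{\ox(n-n_0)})\,\U(K^{\ox n_0}) \ge 2\widehat{\U}(K^{\ox(n-n_0)})$.

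Before invoking Theorem~\ref{upsilon two} one must know that $n_0$ is finite. The hypothesis $C_{0,\NS}(K)>0$ says $\sup_n \frac{1}{n}\log\U(K^{\ox n}) > 0$, so some $m$ satisfies $\U(K^{\ox m})>1$. A routine tensor-product argument on SDP~(\ref{eq:Upsilon}) (take Kronecker products $S_A\ox S_{A'}$ and $U_{AB}\ox U_{A'B'}$ of optimal solutions, and verify the three SDP constraints for $K_1\ox K_2$ using the projection $P_{AB}\ox P_{A'B'}$) shows that $\U$ is super-multiplicative. Hence $\U(K^{\ox km}) \ge \U(K^{\ox m})^k \to \infty$ as $k\to\infty$, and some power of $K$ crosses the threshold $2$; the smallest such exponent is $n_0$.

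For (\ref{asymptotic hatU}), the equality $\sup_n=\lim_n$ follows from Fekete's lemma applied to the super-additive sequence $a_n=\log\widehat{\U}(K^{\ox n})$, the super-multiplicativity of $\widehat{\U}$ having already been noted in the paper. By Theorem~3, $C_{0,\NS}^a(K) = \sup_n \frac{1}{n}\log\widehat{\U}(K^{\ox n})$, so $C_{0,\NS}(K)\le C_{0,\NS}^a(K)$ is immediate from $\U\le\widehat{\U}$. For the reverse inequality, the left side of (\ref{asymptotic hatU=U}) yields, for $n\ge n_0$,
\begin{equation*}
\frac{1}{n}\log\U(K^{\ox n}) \;\ge\; \frac{\log 2}{n} + \frac{n-n_0}{n}\cdot\frac{1}{n-n_0}\log\widehat{\U}\bigl(K^{\ox(n-n_0)}\bigr).
\end{equation*}
Letting $n\to\infty$, the first term vanishes, $(n-n_0)/n\to 1$, and the trailing factor converges to $C_{0,\NS}^a(K)$ by the Fekete limit above; therefore $C_{0,\NS}(K)\ge C_{0,\NS}^a(K)$, completing the proof of (\ref{asymptotic hatU}).

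The only real bookkeeping issue I expect is ensuring $n_0<\infty$ and recognising that Theorem~\ref{upsilon two} admits the clean hypothesis $\U(K^{\ox n_0})\ge 2$ (avoiding the more delicate $\U(K_2)-1\ge 1/\widehat{\U}(K_1)$ branch). Beyond these two points, the whole argument reduces to a direct combination of Theorem~\ref{upsilon two}, Fekete's lemma, and the trivial bound $\U\le\widehat{\U}$.
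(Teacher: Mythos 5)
Your proof is correct and follows essentially the same route as the paper's: Eq.~(\ref{asymptotic hatU=U}) comes from Theorem~\ref{upsilon two} via the clean branch $\U(K^{\ox n_0})\ge 2$, and Eq.~(\ref{asymptotic hatU}) from Fekete's lemma (the paper's appeal to the existence-of-limits lemma in \cite{BNB1998}) applied to the super-additive sequence $\log\widehat{\U}(K^{\ox n})$, combined with the sandwich. You are slightly more explicit than the paper, notably in arguing that $n_0<\infty$ via super-multiplicativity of $\U$, a point the paper asserts without proof.
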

\begin{proof}
Eq. (\ref{asymptotic hatU=U}) is immediatelly from Theorem \ref{upsilon two}. Then,
\begin{equation}
\mathop {\lim }\limits_{n \to \infty }\log \sqrt[n]{{\widehat \U({K^{ \otimes n}})}}=\mathop {\lim }\limits_{n \to \infty }\log \sqrt[n]{{ \U({K^{ \otimes n}})}}.
\end{equation}

To prove Eq. (\ref{asymptotic hatU}), the technique is based on a lemma about the existence of limits in \cite{Barnum}.
On one hand,  $\log \widehat \U({K^{ \otimes n}}) \le 2n\log d$.
On the other hand, since $\widehat \U(K)$ is super-multiplicative, then
$\log \widehat \U({K^{ \otimes (m  n)}}) \ge \log \widehat \U({K^{ \otimes m}}) + \log  \widehat \U({K^{ \otimes n}})$.
Therefore,
\begin{equation}
\mathop {\sup }\limits_{n \ge 1} \frac{\log \widehat \U({K^{ \otimes n}})}{n} =\mathop {\lim }\limits_{n \to \infty }\frac{\log \widehat \U({K^{ \otimes n}})}{n}=C_{\rm{0,NS}}(K).
\end{equation}
\end{proof}

\subsection{Lower bound of the activated capacity}
We further explore the lower bound for the activated zero-error communication assisted with NS codes and study the equivalent conditions for a quantum channel to have a positive capacity.  We find that the one-shot activated capacity is always larger than or equal to the super-dense coding bound in  Theorem 25 in \cite{Duan2016}.
\begin{proposition}
\label{feasibility-K}
Let $K$ be a non-commutative bipartite graph with Choi-Jamio\l{}kowski
projection $P_{AB}$, and let $Q_{AB}=\1_{AB}-P_{AB}$ be the orthogonal
complement of $P_{AB}$. Let $P_B=\tr_A P_{AB}$, then the following are equivalent:
\renewcommand{\theenumi}{\roman{enumi}}
\begin{enumerate}
  \item $C_{0,{\rm NS}}(K)>0$;
 \item $\Aram(K)>1$;
  \item $P_{B}<d_A\1_B$;
  \item $\tr_A Q_{AB}$ is positive definite;
 \item $\cM_{\rm{0,NS}}^a(K)>0$ (or $\widehat \U(K) >1$).
\end{enumerate}
\renewcommand{\theenumi}{\arabic{enumi}}

As a matter of fact, we have
\begin{equation}\begin{split}
 &C_{\rm{0,NS}}(\cN) \geq \cM_{0,{\rm NS}}^a(\cN)\geq \log \frac{d_A}{\|\tr_A P_{AB}\|_\infty}.
\end{split}\end{equation}
\end{proposition}
\begin{proof}
In \cite{Duan2016}, i) , ii), iii) and iv) are proved to be equivalent. We focus on v) here.

On one hand, it is easy from v) to i) by Theorem \ref{asymptotic activated capacity}.

On the other hand, to prove  i) to v), it is equivalent to prove  iii) to v).
We first apply the standard super-dense coding protocol to obtain a cq channel $\cN_s$ with $d_A^2$ outputs $\{(U_m\ox \1_B) J_{AB} (U_m\ox \1_B)^\dag\}$, and the projections are given by $\{(U_m\ox \1_B)P_{AB} (U_m\ox \1_B)^\dag\}$, where $U_m$ are generalized Pauli matrices acting on $A$. Therefore,
\begin{equation}\begin{split}\label{superdense-bound}
\Aram(\cN_s)=&\frac{d_A^2}{\sum_{m=1}^{d_A^2}(U_m\ox \1_B)P_{AB} (U_m\ox \1_B)^\dag }\\
&=\frac{d_A}{\|\tr_A P_{AB}\|_\infty}.
\end{split}\end{equation}
Hence,   $\widehat \U(K) \ge \widehat \U(\cN_s)=\Aram(\cN_s)$. And when $\tr_A P_{AB}<d_A\1_B$ strictly holds,  the right-hand side of the above equation
is strictly larger than $1$. This also means that 1 bit is enough to activate any one-shot useless non-trival channel.
\end{proof}

\vspace{0.25cm}

\begin{example}
Let us consider the amplitude damping channel $\cN_r(\rho)=\sum_{i=0}^1 E_i\rho E_i^\dag$ 
with 
\begin{equation}
E_0=\ketbra{0}{0}+\sqrt{1-r}\ketbra{1}{1}, E_1=\sqrt{r}\ketbra{0}{1} (0\leq r\leq 1).
\end{equation}
The classical communication capability of this class of channels has been studied in \cite{Giovannetti2005,Brandao2011c,Wang2016g}, but its classical capacity, zero-error capacity, NS-assisted zero-error capacity are all unknown.
In particular, when $r=3/4$,
let $S_A=\frac{3}{8}\ketbra 0 0+\frac{3}{4}\ketbra 1 1$ and $U_{AB}=\frac{1}{4}(\ketbra {00} {00}+ \ketbra {00} {11}+\ketbra {11} {00}+\ketbra {11} {11})+\frac{3}{4}\ketbra {10} {10}$, it is not difficult to check that $\{S_A, U_{AB}\}$ is an feasible solution to the SDP~(\ref{eq:Upsilon-hat}) of $\U(\cN_r)$. Hence, 
\begin{align}
\cM_{\rm{0,NS}}^a(\cN_r) \ge \log \frac{\tr S_A}{2}=\log \frac{9}{8}>\cM_{\rm{0,NS}}(\cN_r)= 0.
\end{align}
By applying the super-dense coding bound \cite{Duan2016}, we know 
$C_{0,\NS}(\cN_r)\geq \log \frac{d_A}{\|\tr_A P_{AB}\|_\infty}=\log \frac{10}{9}$. 
And it's clear that the one-shot activated capacity is better than the super-dense coding bound in \cite{Duan2016}, i.e.,
\begin{align}
\cM_{\rm{0,NS}}^a(\cN_r) \ge\log \frac{9}{8}>\log \frac{10}{9}=\frac{d_A}{\|\tr_A P_{AB}\|_\infty}.
\end{align}
\end{example}

\subsection{$C_{\rm{0,NS}}$ and semidefinite packing nubmber}
As the asypmtotic capacity of classical-quantum channel is given by the semidifinite (or fractional) packing number $\Aram(K)$ in Eq.\ref{eq:Aram:cq channel}, will it hold even for general quantum channels? In \cite{Duan2016}, the semidifinite packing number for a general quantum channel was also introduced, i.e.,
\begin{equation}\begin{split}
  \label{eq:Aram}
   \Aram(\cN)= \Aram(K) = \max&\tr S_A  \\
  \text{ s.t. }&  0 \leq S_A,\tr_A P_{AB}(S_A\ox\1_B) \leq \1_B. \\
\end{split}\end{equation}

The difficulty is that we currently do not know efficient methods to calculate the asymptotic no-signalling zero-error capacity. 

 We will exhibit an example to disprove it.
\begin{proposition}
There exists a quantum channel $\cN$ with non-commutative bipartite graph $K$ such that $\widehat \U(K)>\Aram(K)$.
Consequently, 
\begin{equation}
C_{\rm{0,NS}}(\cN)\ne \Aram(\cN).
\end{equation}
\end{proposition}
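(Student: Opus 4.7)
The plan is to exhibit an explicit non-commutative bipartite graph $K$ witnessing the strict inequality $\widehat\U(K) > \Aram(K)$; the consequence $C_{0,\NS}(K)\ne \log \Aram(K)$ then follows immediately because, by Theorem \ref{asymptotic activated capacity}, $C_{0,\NS}(K) \geq \log \widehat\U(K)$. Since Theorem \ref{activated-cq} shows $\widehat\U(K)=\Aram(K)$ for every cq-graph, any candidate must come from a genuinely non-commutative Choi-Kraus operator space, so I would restrict the search to low-dimensional quantum channels whose Kraus operators admit no common eigenbasis, e.g.\ a carefully tuned perturbation of the qubit family of Example 7, or a small tensor/direct-sum composite thereof. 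When possible I would exploit a natural symmetry (covariance under a finite group acting on the input) to cut both SDPs (\ref{eq:Upsilon-hat}) and (\ref{eq:Aram}) down to optimizations in a handful of real parameters.

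The forward step is to construct an explicit primal feasible pair $(S_A,U_{AB})$ of SDP (\ref{eq:Upsilon-hat}) with a concrete value of $\tr S_A$, thereby lower-bounding $\widehat\U(K)$. The construction is modeled on the super-dense-coding witness used in the proof of Theorem \ref{feasibility-K}, but pushed beyond it by allowing $U_{AB}$ to have components outside $\operatorname{range}(P_{AB})$ in the way permitted by the relaxed constraint $\tr_A U_{AB}\leq \1_B$.

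For the matching upper bound on $\Aram(K)$, I would pass to the dual of SDP (\ref{eq:Aram}), namely $\Aram(K)=\min \tr T_B$ subject to $T_B\geq 0$ and $\tr_B[P_{AB}(\1_A\ox T_B)]\geq \1_A$, and exhibit a dual feasible $T_B$ whose trace is strictly smaller than the primal value from the previous step. Weak duality then certifies $\widehat\U(K) > \Aram(K)$, and Theorem \ref{asymptotic activated capacity} upgrades this to $C_{0,\NS}(K) > \log \Aram(K)$.

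The main obstacle is that $\Aram$ is a rather forgiving quantity, with no $U_{AB}$ variable and no refinement of the support of $P_{AB}$, so it tends to be large; meanwhile the condition $P_{AB}(S_A\ox\1_B-U_{AB})=0$ in SDP (\ref{eq:Upsilon-hat}) is stringent, so $\widehat\U$ is naively small. Making the gap go the right way requires a $K$ whose Choi projection is structured so that this stringent condition can be met at low cost while the dual of $\Aram$ already admits a strictly smaller certificate. Should a single small channel elude direct analysis, the fallback is to form a direct sum or tensor product of simpler activatable building blocks and amplify any numerical gap via the additivity in Theorem \ref{upsilon-direct-sum} together with the super-multiplicativity of $\widehat\U$.
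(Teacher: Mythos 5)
Your strategy is sound and is in fact exactly the strategy the paper uses: fix an explicit non-cq Choi--Kraus operator space $K$, lower-bound $\widehat\U(K)$ via a primal feasible point of SDP~(\ref{eq:Upsilon-hat}), and upper-bound $\Aram(K)$ via a dual feasible $T_B$ satisfying $T_B\geq 0$ and $\tr_B P_{AB}(\1_A\ox T_B)\geq \1_A$. Your dual of $\Aram$ is written correctly, your remark that cq-graphs are ruled out by Theorem~\ref{activated-cq} is the right structural observation, and the deduction of $C_{0,\NS}(K)\ne\log\Aram(K)$ from Theorem~\ref{asymptotic activated capacity} is exactly how the paper closes the argument.

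However, the proposal stops at the level of a plan: it never actually produces a candidate channel, a primal witness for $\widehat\U$, or a dual witness for $\Aram$. For an existence statement, that is the entire content. The paper's proof supplies a concrete three-Kraus-operator channel on a qutrit, with
$E_0=\tfrac{1}{\sqrt2}(\ketbra{0}{0}+\ketbra{2}{0})$,
$E_1=\sqrt{50/99}\,\ketbra{0}{2}+\sqrt{1/99}\,\ketbra{1}{1}+\sqrt{49/99}\,\ketbra{2}{2}$,
$E_2=\sqrt{98/99}\,\ketbra{0}{1}$,
gives the induced $P_{AB}$, exhibits the explicit dual certificate $T_B=\ketbra{0}{0}+\sqrt{0.1751}(\ketbra{0}{2}+\ketbra{2}{0})+0.1751\,\ketbra{2}{2}$ giving $\Aram(K)\leq 1.1751$, and then verifies $\widehat\U(K)\approx 1.1767$ by SDP. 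Your heuristics for where to look (perturb the two-pure-state cq example, exploit covariance symmetry, push beyond the super-dense-coding witness, or amplify a tiny gap by Theorem~\ref{upsilon-direct-sum} and super-multiplicativity) are reasonable but are not guaranteed to succeed without further argument; in particular the super-dense-coding construction of Theorem~\ref{feasibility-K} produces a cq-channel, for which by Theorem~\ref{activated-cq} one always has $\widehat\U=\Aram$, so ``modeling on'' it does not by itself get you off the cq locus. Until a concrete $K$ with both certificates is supplied, the proof is incomplete.
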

\begin{proof}
Let $K$ correspond to the quantum channel $\cN(\rho)=\sum_{i=0}^{2}E_i\rho E_i^\dagger$ with $E_0=\frac{1}{\sqrt 2} \ketbra 0 0+\frac{1}{\sqrt 2} \ketbra 2 0$, $E_1=\sqrt{\frac{50}{99}}\ketbra 0 2+\sqrt{\frac{1}{99}} \ketbra 1 1+\sqrt{\frac{49}{99}}\ketbra 2 2$ and $E_2=\sqrt{\frac{98}{99}} \ketbra 0 1$. 



By solving SDPs on Matlab \cite{CVX, QETLAB}, we find that 
\begin{equation}
\widehat \U(\cN)\approx 1.1767>1.1751\ge\Aram(\cN).
\end{equation}
Then, it leads to
\begin{equation}
C_{\rm{0,NS}}(\cN)\ge \cM_{\rm{0,NS}}(\cN) >\log \Aram(\cN).
\end{equation}
\end{proof}

\section{Discussions}
%

There are many related interesting open problems, of which we highlight a few here. For example, an interesting problem to characterize the activated entanglement-assisted classical communication over quantum channels. Also, it would be interesting to study the asymptotic NS-assisted zero-error capacity, 
which is still difficult to calculate since it can be larger than the semidefinite (fractional) packing number.
 Perhaps one could consider other quantum generalizations of the fractional packing number.


This work was partly supported by the Australian Research Council (Grants No. DP120103776 and No. FT120100449) and the National Natural Science Foundation of China (Grant No. 61179030).

\small

\end{document}